\begin{document}
\sloppypar \sloppy
\title[The paleoclassical interpretation]{The paleoclassical interpretation of quantum theory}
\author{I. Schmelzer}
\thanks{Berlin, Germany}
\email{ilja.schmelzer@gmail.com}%
\urladdr{ilja-schmelzer.de}

\begin{abstract}
This interpretation establishes a completely classical ontology -- only the classical trajectory in configuration space -- and interprets the wave function as describing incomplete information (in form of a probability flow) about this trajectory. This combines basic ideas of de Broglie-Bohm theory and Nelsonian stochastics about the trajectory with a Bayesian interpretation of the wave function.

Various objections are considered and discussed. In particular a regularity principle for the zeros of the wave function allows to meet the Wallstrom objection. 
\end{abstract}

\maketitle

\tableofcontents

\newcommand{\pd}{\partial} 
\newcommand{\ud}{\mathrm{d}} 
\newcommand{\f}{\varphi}
\renewcommand{\a}{\alpha}
\newcommand{\w}{\nabla\times v}

\renewcommand{\H}{\mbox{$\mathcal{H}$}} 
\newcommand{\B}{\mbox{$\mathbb{Z}_2$}} 
\newcommand{\Z}{\mbox{$\mathbb{Z}$}}
\newcommand{\R}{\mbox{$\mathbb{R}$}}
\newcommand{\C}{\mbox{$\mathbb{C}$}}

\newtheorem{theorem}{Theorem}
\newtheorem{principle}{Principle}
\newtheorem{postulate}{Postulate}
\newcommand{\Sch}{Schr\"{o}dinger\/ }

\section{Introduction}

The interpretation presented here completely revives classical ontology: Reality is described completely by a classical trajectory $q(t)$, in the classical configuration space $Q$. 

The wave function is also interpreted in a completely classical way -- as a particular description of a classical probability flow, defined by the probability distribution $\rho(q)$ and average velocities $v^i(q)$. The formula which defines this connection has been proposed by Madelung 1926 \cite{Madelung} and is therefore as old as quantum theory itself. It is the polar decomposition
\begin{equation}\label{polar}
\psi(q)=\sqrt{\rho(q)} e^{\frac{i}{\hbar} S(q)},
\end{equation}
with the phase $S(q)$ being a potential of the velocity $v^i(q)$: 
\begin{equation}\label{guiding}
v^i(q) = m^{ij} \pd_j S(q),
\end{equation}
so that the flow is a potential one.\footnote{Here, $m^{ij}$ denotes a quadratic form -- a ``mass matrix'' -- on configuration space. I assume in this paper that the Hamiltonian is quadratic in the momentum variables, thus,
\begin{equation}\label{HamiltonFunction}
H(p,q) = \frac{1}{2}m^{ij}p_ip_j + V(q).
\end{equation}
This is quite sufficient for relativistic field theory, see app. \ref{relativity}.
}

This puts the interpretation into the classical realist tradition of interpretation of quantum theory, the tradition of  de Broglie-Bohm (dBB) theory \cite{deBroglie}, \cite{Bohm} and Nelsonian stochastics \cite{Nelson}. 

But there is a difference -- the probability flow is interpreted as describing incomplete information about the true trajectory $q(t)$. This puts the interpretation into another tradition -- the interpretation of probability theory as the logic of plausible reasoning in situations with incomplete information, as proposed by Jaynes \cite{Jaynes}. This objective variant of the Bayesian interpretation of probability follows the classical tradition of Laplace \cite{Laplace} (to be distinguished from the subjective variant proposed by de Finetti \cite{deFinetti}). 

So this interpretation combines two classical traditions -- classical realism about trajectories and the classical interpretation of probability as the logic of plausible reasoning. 

There is also some aspect of novelty -- a clear and certain program for development of subquantum theory. Quantum theory makes sense only as an approximation for potential probability flows. It has to be generalized to non-potential flows, described by the flow variables  $\rho(q), v^i(q)$. Without a potential $S(q)$ there will be also no wave function $\psi(q)$ in such a subquantum theory. And the flow variables are not fundamental fields themself, they also describe only incomplete information. The fundamental theory has to be one for the classical trajectory $q(t)$ alone. 

So this interpretation is also a step toward the development of a subquantum theory. This is an aspect which invalidates prejudices against quantum interpretations as pure philosophy leading nowhere. Nonetheless even this program for subquantum theory has also classical character, making subquantum theory closer to a classical theory. 

Thinking about how to name this interpretation of quantum theory, I have played around with ``neoclassical'', but rejected it, for a simple reason: There is nothing ``neo'' in it. Instead, it deserves to be named ``paleo''.\footnote{Association with the ``paleolibertarian'' political direction is welcome and not misleading -- it also revives classical libertarian ideas considered to be out of date for a long time.} 

And so I have decided to name this interpretation of quantum theory ``paleoclassical''. 

\subsection{The Bayesian character of the paleoclassical interpretation}

The interpretation of the wave function is essentially Bayesian. 

It is the objective (information-dependent) variant of Bayesian probability, as proposed by Jaynes \cite{Jaynes}, which is used here. It has to be distinguished from the subjective variant proposed by de Finetti \cite{deFinetti}, which is embraced by the Bayesian interpretation of quantum theory proposed by Caves, Fuchs and Schack \cite{Bayesian}. 

The Bayesian interpretation of the wave function is in conflict with the objective character assigned to the wave function in dBB theory. Especially Bell has emphasized that the wave function has to be understood as real object. 

But the arguments in favour of the reality of the wave function, even if strong, appear insufficient: The effective wave function of small subsystems depends on the configuration of the environment. This dependence is sufficient to explain everything which makes the wave function similar to a really existing object. It is only the wave function of a closed system, like the whole universe, which is completely interpreted in terms of incomplete information about the system itself. 

Complexity and time dependence, even if they are typical properties of real things, are characteristics of incomplete information as well. 

The Bayesian aspects essentially change the character of the interpretation: The dBB ``guiding equation'' \eqref{guiding}, instead of guiding the configuration, becomes part of the definition of the information about the configuration contained in the wave function. 

This has the useful consequence that there is no longer any ``action without reaction'' asymmetry: While it is completely natural that the real configuration does not have an influence on the information available about it, there is also no longer any ``guiding'' of the configuration by the wave function. 

\subsection{The realistic character of the paleoclassical interpretation}

On the other hand, there is also a strong classical realistic aspect of the paleoclassical interpretation: First of all, it is a realistic interpretation. But, even more, its ontology is completely classical -- the classical trajectory $q(t)$ is the only beable.  

This defines an important difference between the paleoclassical interpretation and other approaches to interpret the wave function in terms of information: It answers the question ``information about what'' by explicitly defining the ``what'' -- the classical trajectory -- and even the ``how'' -- by giving explicit formulas for the probability distribution $\rho(q)$ and the average velocity $v^i(q)$. 

The realistic character leads also to another important difference -- that of motivation. For the paleoclassical interpretation, there is no need to solve a measurement problem -- there is none already in dBB theory, and the dBB solution of this problem -- the explicit non-\Sch evolution of the conditional wave function of the measured system, defined by the wave function of system and measurement device and the actual trajectory of the measurement device -- can be used without modification.  

There is also no intention to save relativistic causality by getting rid of the non-local collapse -- the paleoclassical interpretation accepts a hidden preferred frame, which is yet another aspect of its paleoclassical character. Anyway, because of Bell's theorem, there is no realistic alternative.
\footnote{Here I, of course, have in mind the precise meaning of ``realistic'' used in Bell's theorem, instead of the metaphorical one used in ``many worlds'', which is, in my opinion, not even a well-defined interpretation.}

\subsection{The justification}

So nor the realistic approach of dBB theory, which remains influenced by the frequentist interpretation of probability (an invention of the positivist Richard von Mises \cite{Mises}) and therefore tends to objectify the wave function, nor the Bayesian approach, which embraces the anti-realistic rejection of unobservables and therefore rejects the preferred frame, are sufficiently classical to see the possibility of a completely classical picture. 

But it is one thing to see it, and possibly even to like it because of its simplicity, and another thing to consider it as justified.  

The justification of the paleoclassical interpretation presented here is based on a reformulation of classical mechanics in term of -- a wave function. It is a simple variant of Hamilton-Jacobi theory with a density added, but this funny reformulation of classical mechanics appears to be the key for the paleoclassical interpretation. The point is that its exact equivalence to classical mechanics (in the domain of its validity) and even the very fact that it shortly becomes invalid (because of caustics) almost forces us to accept -- for this classical variant -- the interpretation in terms of insufficient information. And it also provides all the details we use. 

But, then, why should we change the ontological interpretation if all what changes is the equation of motion?  Moreover, if (as it appears to be the case) the \Sch equation is simply the linear part of the classical equation, so that there is not a single new term which could invalidate the interpretation?  And where one equation is the classical limit of the other one?

We present even more evidence of the conceptual equivalence between the two equations: A shared explanation, in terms of information, that there should be a global $U(1)$ phase shift symmetry, a shared explanation of the product rule for independence, a shared explanation for homogeneity of the equation. And there is, of course, the shared Born probability interpretation. All these things being equal, why should one even think about giving the two wave functions a different interpretation? 

\subsection{Objections} 

There are a lot of objections against this interpretation to care about. 

Some of them are quite irrelevant, because they are handled already appropriately from point of view of de Broglie-Bohm theory, so I have banned them into appendices: The objection of incompatibility with relativity (app. \ref{relativity}), the Pauli objection that it destroys the symmetry between configuration and momentum variables (app. \ref{Pauli}), some doubts about viability of field-theoretic variants related with overlaps (app. \ref{fields}). 

There are the arguments in favour of interpreting the wave function as describing external real beables. These are quite strong but nonetheless insufficient: The effective wave functions of small subsystems -- and we have no access to different ones -- \emph{depend} on real beables external to the system itself, namely the trajectories of the environment. And complexity and dynamics are properties of incomplete information as well. 

And there is the Wallstrom objection \cite{Wallstrom}, the in my opinion most serious one. How to justify, in terms of $\rho(q)$ and $v^i(q)$, the ``quantization condition''
\begin{equation}
\oint m_{ij} v^i(q) dq^j = \oint \pd_j S(q) dq^j = 2\pi m\hbar, \qquad m\in\Z.
\end{equation}
for closed trajectories around zeros of the wave function, which is, in quantum theory, a trivial consequence of the wave function being uniquely defined globally, but which is completely implausible if formulated in terms of the velocity field?

Fortunately, I have found a solution of this problem in \cite{againstWallstrom}, based on an additional regularity postulate. All what has to be postulated is that \emph{$0< \Delta \rho(q) < \infty$ almost everywhere where $\rho(q)=0$}. This gives the necessary quantization condition. Moreover, there are sufficiently strong arguments that this condition can be justified by a subquantum theory. 

The idea that what we observe are localized wave packets instead of the configurations themself I reject in app. \ref{wavepackets}. So all the objections I know about can be answered in a satisfactory way. Or at least I think so. 

\subsection{Directions of future research} 

Different from many other interpretations of quantum theory, the paleoclassical interpretation suggests a quite definite program of development of a more fundamental, subquantum theory: It defines the ontology of subquantum theory as well as the equation which can hold only approximately -- the potentiality condition for the velocity $v^i(q)$. The consideration of the Wallstrom objection even identifies the domain where modifications of quantum theory are necessary -- the environment of the zeros of the wave function. 

One can identify also another domain where quantum predictions will fail in subquantum theory -- the reliability of quantum computers, in particular their ability to reach exponential speedup in comparison with classical computers. 

Another interesting question is what restrictions follow for $\rho(q), v^i(q)$ from the interpretation as a probability flow for a more fundamental theory for the trajectory $q(t)$ alone. An answer may be interesting for finding answers to the ``why the quantum'' question. A question which cannot be answered by an interpretation which is restricted to the ``what is the quantum'' question. 

\section{A wave function variant of Hamilton-Jacobi theory}

Do you know that one can reformulate classical theory in terms of a wave function?  With an equation for this wave function which is completely classical, but, nonetheless, quite close to the \Sch equation?  

In fact, this is a rather trivial consequence of the mathematics of Hamilton-Jacobi theory and the insights of Madelung \cite{Madelung}, de Broglie \cite{deBroglie}, and Bohm \cite{Bohm}. All one has to do is to look at them from another point of view. Their aim was to understand quantum theory, by representing quantum theory in a known, more comprehensible, classical form -- a form remembering the classical Hamilton-Jacobi equation
\begin{equation}\label{HamiltonJacobi}
\pd_t S(q) + \frac12 m^{ij} \pd_i S(q) \pd_j S(q) + V(q) = 0.
\end{equation}
So, the real part of the \Sch equation, divided by the wave function, gives
\begin{equation}\label{Bohm}
\pd_t S(q) + \frac12 m^{ij} \pd_i S(q) \pd_j S(q) + V(q) + Q[\rho] = 0,
\end{equation}
with only one additional term -- the quantum potential 
\begin{equation}\label{Qdef}
Q[\rho] = -\frac{\hbar^2}{2} \frac{m^{ij} \pd_i\pd_j \sqrt{\rho}}{\sqrt{\rho}}
\end{equation}
The imaginary part of the \Sch equation (also divided by $\psi(q)$) is the continuity equation for $\rho$
\begin{equation}\label{continuity}
\pd_t\rho(q,t) + \pd_i(\rho(q,t)v^i(q,t)) = 0
\end{equation}
 
Now, all what we have to do is to revert the aim -- instead of presenting the \Sch equation like a classical equation, let's present the classical Hamilton-Jacobi equation like a \Sch equation. There is almost nothing to do -- to add a density $\rho(q)$ together with a continuity equation \eqref{continuity} is trivial.  We use the same polar decomposition formula \eqref{polar} to define the classical wave function. It remains to do the same procedure in the other direction. The difference is the same -- the quantum potential. So we obtain, as the equation for the classical wave function, an equation I will name here pre-\Sch equation:
\begin{equation}\label{pre}
i\hbar \pd_t \psi(q,t) = -\frac{\hbar^2}{2} m^{ij}\pd_i\pd_j\psi(q,t) + (V(q) - Q[\rho]) \psi(q,t) = \hat{H} \psi - Q[|\psi|^2]\psi.
\end{equation}

Of course, the additional term is a nasty, nonlinear one. But that's the really funny point: We can obtain now quantum theory as a linear approximation of classical theory, in other words, as a simplification\footnote{But note the important point that the very definition of ``linear approximation'' depends on the definition of the ``wave function.'', which depends on a choice of the classically arbitrary parameter $\hbar$. So there is not, as usual, a meaningful notion of ``\emph{the} linear approximation''.}

But there is more than this in this classical equation for the classical wave equation. The point is that there is an exact equivalence between the different formulations of classical theory, despite the fact that one is based on a classical trajectory $q(t)$ only, and the other has a wave function $\psi(q,t)$ together with the trajectory $q(t)$. And it is this exact equivalence which can be used to identify the meaning of the classical wave function. 

Because of its importance, let's formulate this in form of a theorem:
\begin{theorem}[equivalence]
Assume $\psi(q,t),q(t)$ fulfill the pre-\Sch equation \eqref{pre} for some Hamilton function $H(p,q)$ of type \eqref{HamiltonFunction}, together with the guiding equation \eqref{guiding}. 

Then, whatever the initial values $\psi_0(q),q_0$ for the wave function $\psi_0(q)=\psi(q,t_0)$ and the initial configuration $q_0=q(t_0)$, there exists initial values $q_0,p_0$ so that the Hamilton equation for $H(p,q)$ with these initial values gives $q(t)$.  
\end{theorem}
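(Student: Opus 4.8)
The plan is to reduce the statement to the classical method of characteristics for the Hamilton--Jacobi equation, the only genuinely new input being that the extra term in the pre-\Sch equation \eqref{pre} has been chosen precisely so that the quantum potential disappears. First I would substitute the polar decomposition \eqref{polar} into \eqref{pre} and separate real and imaginary parts. The imaginary part reproduces the continuity equation \eqref{continuity} for $\rho$ and is irrelevant here. In the real part, the term $-Q[|\psi|^2]\psi=-Q[\rho]\psi$ in \eqref{pre} cancels exactly the quantum potential \eqref{Qdef} that the kinetic operator $-\tfrac{\hbar^2}{2}m^{ij}\pd_i\pd_j$ produces, so what remains for the phase $S(q,t)$ is the \emph{classical} Hamilton--Jacobi equation \eqref{HamiltonJacobi} --- with no surviving $\hbar$-dependence and, crucially, decoupled from $\rho$. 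This is the only place where the special form of \eqref{pre} enters.

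Next, along the given trajectory I would define $p_j(t):=\pd_j S(q(t),t)$ and take $p_0:=p(t_0)=\pd_j S(q_0,t_0)$; since $q_0$ is already given, this is the only initial datum that has to be produced. The guiding equation \eqref{guiding}, $\dot q^i=v^i=m^{ij}\pd_j S=m^{ij}p_j$, is then literally the first Hamilton equation $\dot q^i=\pd H/\pd p_i$ for $H$ of the form \eqref{HamiltonFunction}. For the second one I would differentiate along the curve, $\dot p_j=\pd_t\pd_j S+\dot q^k\pd_k\pd_j S$, insert $\dot q^k=m^{kl}\pd_l S$, and use the spatial gradient of \eqref{HamiltonJacobi}, which (by symmetry of $m^{kl}$) reads $\pd_j\pd_t S=-\,m^{kl}\pd_k S\,\pd_j\pd_l S-\pd_j V$. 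The two quadratic terms cancel and leave $\dot p_j=-\pd_j V=-\pd H/\pd q^j$. Thus $(q(t),p(t))$ solves Hamilton's equations for $H$ with initial data $(q_0,p_0)$, and by uniqueness of solutions of that ODE its configuration component coincides with $q(t)$, as claimed.

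The differentiations above are routine; the point where care is genuinely needed --- and the main obstacle to any global version --- is the meaning of $S$ itself. By \eqref{polar} the phase is defined only modulo $2\pi\hbar$ and only where $\psi\neq 0$, so $\nabla S$ is a well-defined smooth vector field only on the complement of the zero set of $\psi$; for the argument to run, $q(t)$ must stay in such a region and $\psi$ must be twice differentiable there --- an assumption already implicit in even writing \eqref{guiding}. I therefore expect the theorem to hold as stated only as a local-in-time equivalence on the domain where all objects in \eqref{pre}--\eqref{guiding} make sense, breaking down exactly at the nodes of $\psi$ and at caustics; but this is a feature rather than a bug, since it is precisely this breakdown that the later sections exploit to argue for the information-theoretic reading of $\psi$.
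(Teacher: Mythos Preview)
Your proposal is correct and follows essentially the same route as the paper: reduce the pre-\Sch equation via the polar decomposition to the classical Hamilton--Jacobi equation (the $-Q[\rho]\psi$ term being designed precisely to kill the quantum potential), and then invoke the standard equivalence between Hamilton--Jacobi and Hamiltonian dynamics. The paper's own proof simply cites ``classical Hamilton--Jacobi theory'' as presupposed and notes that adding $\rho$ with its continuity equation does not disturb that equivalence; you have spelled out the method-of-characteristics step explicitly and added the caveat about zeros of $\psi$ and caustics, which the paper mentions only later in the text.
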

\begin{proof} 
The difficult part of this theorem is classical Hamilton-Jacobi theory, so, presupposed here as known. The simple modifications of this theory used here -- adding the density $\rho(q)$, with continuity equation \eqref{continuity}, and rewriting the result in terms of the wave function $\psi(q)$ defined by the polar decomposition \eqref{polar}, copying what has been done by Bohm \cite{Bohm} -- do not endanger the equivalence between Hamiltonian (or Lagrangian) and Hamilton-Jacobi theory. 
\end{proof}

\section{The wave function describes incomplete information}

So let's evaluate now what follows from the equivalence theorem about the physical meaning of the (yet classical) wave function. 

First, it is the classical (Hamiltonian or Lagrangian) variant which is preferable as a fundamental theory. There are three arguments to justify this:

\begin{itemize}

\item Simplicity of the set of fundamental variables: We need only a single trajectory $q(t)$ instead of trajectory $q(t)$ together with a wave function $\psi(q,t)$.  

\item Correspondence between fundamental variables and observables: It is only the trajectory $q(t)$ in configuration space which is observable. 

\item Stability in time: The wave function develops caustics after a short period of time and becomes invalid. The classical equations of motion does not have such a problem. 

\end{itemize}

So it is the classical variant which is preferable as a fundamental theory. Thus, we can identify the true beable of classical theory with the classical trajectory $q(t)$. 

The next observation is that, once $q(t)$ is known and fixed, the wave function contains many degrees of freedom which are unobservable in principle: Many different wave functions define the same trajectory $q(t)$. So we can conclude that these different wave functions do \emph{not} describe physically different states, containing additional beables. 

On the other hand, this is true only if $q$ is known. What if $q$ is not known? In this case, the wave function defines simply a subset of all possible classical trajectories, and a probability measure on this subset.  

To illustrate this, a particular most important example of a Hamilton-Jacobi function is useful: It is the function $S(q_0,t_0,q_1,t_1)$ defined by
\begin{equation}\label{Minimumproblem}
 S(q_0,t_0,q_1,t_1) = \int_{t_0}^{t_1} L(q(t),\dot{q}(t),t) dt,
\end{equation} 
where the integral is taken over the classical solutions $q(t)$ of this minimum problem with initial and final values $q(t_0)=q_0$, $q(t_1)=q_1$. This function fulfills the Hamilton-Jacobi equation in the variables $q_0,t_0$ as well as $q_1,t_1$. In both versions, it can be characterized as a Hamilton-Jacobi function $S(q,t)$ which is defined by a subset of trajectories: The function $S(q_0,t_0,q,t)$ describes the subset of trajectories going through $q_0$ at $t_0$, while the function $S(q,t,q_1,t_1)$ describes the subset of trajectories going through $q_1$ at $t_1$.  

We can generalize this. The phase $S(q,t)$ tells us the value of $\dot{q}(t)$ given $q(t)$: \emph{If} $q(t)=q_0$ \emph{then} $\dot{q}(t)=v_0$, with $v_0$  defined by the guiding equation \eqref{guiding}. So, $S(q,t)$ always distinguishes a particular subset of classical trajectories. 

Even more specific, this subset described by $S(q,t)$ can be uniquely described in terms of a subset of the possible initial values at the moment of time $t$ -- the configuration $q(t_0)$ and the momentum $p(t_0)=\nabla S(q(t_0),t_0)$, that means, as a subset of the possible values for the fundamental beables $q,p$ (or $q,\dot{q}$). 

The other part -- the density $\rho(q)$ -- is nothing but a probability density on this particular subset. 

Of course, a subset is nothing but a special delta-like probability measure, so that the wave function simply defines a probability density on the set of possible initial values:
\begin{equation}\label{probability}
\rho(p,q)dpdq = \rho(q)\delta(p-\nabla S(q))dpdq
\end{equation}
The pre-\Sch equation is, therefore, nothing but the evolution equation for this particular probability distribution. 

So our classical, Hamilton-Jacobi wave function is nothing mystical, but simply a very particular form of a standard classical probability density $\rho(p,q)$ on the phase space. 

In particular, the pre-\Sch equation for the wave function is nothing but a particular case of the Liouville equation, the standard law of evolution of standard classical probability distributions $\rho(p,q)$, for the particular ansatz \eqref{probability}, and follows from the fundamental law of evolution for the true, fundamental classical beables. 

Moreover, the Liouville equation also defines the domain of applicability of the equation for the wave function. This domain is, in fact, restricted. In terms of $\rho(p,q)$, it will always remain a probability distribution on some Lagrangian submanifold. But this Lagrangian submanifold will be, after some time, no longer a graphic of a function $p=\nabla S(q)$ on configuration space -- there may be caustics, and in this case there will be several values of momentum for the same configuration $q$. If this happens, the wave function is no longer an adequate description. 

Such an effect -- restricted validity -- is quite natural for the evolution of information, but not for fundamental beables. 

So the wave function variant of Hamilton-Jacobi theory almost forces us to accept an interpretation of the wave function in terms of incomplete information. Indeed,

\begin{itemize}

\item The parts of the wave function, $\rho(q)$ as well as $S(q)$, make sense as describing a well-defined type of incomplete information about the classical configuration, namely the probability distribution $\rho(p,q)$ defined by \eqref{probability}.  

\item The alternative, to interpret the wave function as describing some other, external beables, does not make sense, given the observational equivalence of the theory with simple Lagrangian classical mechanics, with $q(t)$ as the only observable. Additional beables should influence, at least in some circumstances, the $q(t)$. They don't.  

\end{itemize}

So it looks like a incomplete information, it behaves like incomplete information, it becomes invalid like incomplete information -- it is incomplete information.

And, given that we know, in the case of the pre-\Sch equation, the fundamental law of evolution of the beables themself, it also makes no sense to reify this particular probability distribution as objective.  A probability distribution defines a set of incomplete information about the real configuration, that's all. 

\subsection{What about the more general case?}

Of course, the considerations both have been based on the equivalence theorem between the classical evolution and the evolution defined by the pre-\Sch equation. It was this exact equivalence which was able to give us some certainty, to remove all doubts that there is something else, some information about other beables, hidden in the wave function. 

But what about the more general case, the case where we do not have an exact equivalence between an equation in terms of $q,\psi(q)$ and an equation purely in terms of the classical trajectory $q(t)$?  In such a situation, the case for an interpretation of $\psi(q)$ in terms of incomplete information about the $q(t)$ is, of course, a little bit weaker. 

Nonetheless, given such an ideal correspondence for the pre-\Sch equation, the interpretation remains certainly extremely plausible in a more general situation too. Indeed, why should one change the interpretation, the ontological meaning, of $\psi(q)$, if all what is changed is that we have replaced the pre-\Sch equation by another equation? The evolution equation is different, that's all. In itself, a modification of the evolution equation does not give even a bit of motivation to change the interpretation. 

And I think that to find such a motivation is a really hard job. The similarity between the two variants of the \Sch equation does not makes it easier: The funny observation that the \Sch equation is the linear part of the pre-\Sch equation becomes relevant here.  If the \Sch equation would contain some new terms, this would open the door for attempts to show that the new terms do not make sense in the original interpretation. But there are no new terms in the linear part of the pre-\Sch equation. All terms of the \Sch equation are already contained in the pre-\Sch equation. So they all make sense.\footnote{It has to be mentioned in this connection that there is something present in the \Sch equation which is not present in the pre-\Sch equation -- the dependence on $\hbar$. So one can at least try to base an argument on this additional dependence. But, as described below, if one incorporates a Nelsonian stochastic process, the dependence on $\hbar$ appears in a natural way as connected with the stochastic process. So to use this difference to justify a modification of the ontology remains quite nontrivial.}

\subsection{The classical limit}

There is also another strong argument for interpreting both theories in the same way -- that classical theory appears as the classical limit of quantum theory.  

The immediate consequence of this is that both theories have the same intention -- the description, as accurate as possible, of the same reality.  

So this is not a situation where the same mathematical apparatus is applied to quite different phenomena, so that it is natural use a different interpretation even if the same mathematical apparatus is applied. In our case, we use the same mathematical formalism to describe the same thing. At least in the classical limit, quantum theory has to describe the same thing -- with the same interpretation -- as the classical interpretation. 

\section{What follows from the interpretation}

But there is not only the similarity between the equations, and that the object is essentially the same, which suggests to use the same interpretation for both variants of the wave function. 

There are also some interesting consequences of the interpretation. And these consequences, to be shared by all wave functions which follow this interpretation, are fulfilled by the quantum wave function.  

\subsection{The relevance of the information contained in the wave function} 

A first consequence of the interpretation can be found considering the \emph{relevance} of the information contained in the wave function, as information about the real beables. In fact, assume that the wave function $\psi(q)$ contains a lot of information which is irrelevant as information about the $q$. This would open the door for some speculation about the nature of this additional information. Once it cannot be interpreted as information about the $q$, it has to contain information about some other real beables. So let's consider which of the information contained in the wave function is really relevant, tells us something about the true trajectory $q(t)$.

Now, knowledge about the probability $\rho(q)$ is certainly relevant if we don't know the real value of $q$. And it is relevant in all of its parts. 

Then, as we have seen, $S(q)$ gives us, via the ``guiding equation'' \eqref{guiding}, the clearly relevant information about the value of $\dot{q}$ given the value of $q$. Given that we don't know the true value of $q$, this information is clearly relevant. But, different from $\rho(q)$, the function $S(q)$ contains also a little bit more: Another function $S'=S+c$ would give \emph{exactly} the same information about the real beables. 

As a consequence, the wave function $\psi(q)$ also contains a corresponding additional, irrelevant information. The polar decomposition \eqref{polar} defines what it is  -- a global constant phase factor.  

So we find that all the information contained in $\psi(q)$ -- \emph{except for a global constant phase factor} -- is relevant information about the real beables $q$. 

At a first look, this seems trivial, but I think it is a really remarkable property of the paleoclassical interpretation. 

The irrelevance of the phase factor is a property of the interpretation of the meaning of the wave function. It follows from this interpretation: We have considered all the ways how to obtain information about the beables from $\psi(q)$, and we have found that all the information is relevant, except this constant phase factor. 

And now let's consider the equations. Above equations considered up to now -- the pre-\Sch equation as well as the \Sch equation -- have the same symmetry regarding multiplication with a constant phase factor: Together with $\psi(q,t)$, $c\psi(q,t)$ is a solution of the equations too.  

This is, of course, how it should be if the wave function describes incomplete information about the $q$. If, instead, it would describe some different, external beables, then there would be no reason at all to assume such a symmetry. In fact, why should the values of some function, values at points far away from each other, be connected with each other by such a global symmetry?  

So every interpretation which assigns the status of reality to $\psi(q)$ has, in comparison, a problem to explain this symmetry. A popular solution is to assign reality not to $\psi(q)$, but, instead, to the even more complicate density matrix $|\psi\rangle\langle\psi|$. The flow variables are, obviously, preferable by simplicity.

\subsection{The independence condition}

It is one of the great advantages of Jaynes' information-based approach to plausible reasoning \cite{Jaynes} that it contains common sense principles to be applied in situations with insufficient information. If we have no information which distinguishes the probability of the six possible outcomes of throwing a die, we \emph{have} to assign equal probability to them. Everything else would be irrational.\footnote{The purely subjectivist, de Finetti approach differs here. It does not make prescriptions about the initial probability distributions -- all what is required is that one updates the probabilities following the rules of Bayesian updating. This difference is one of the reasons for my preference for the objective approach proposed by Jaynes \cite{Jaynes}}  

Similar principles work if we consider the question of independence. From a frequentist's point of view, independence is a quite nontrivial physical assumption, which has to be tested. And, in fact, there is no justification for it, at least none coming from the frequentist interpretation. From point of view of the logic of plausible reasoning, the situation is much better: Once we have no \emph{information} which justifies any hypothesis of a dependence between two statements $A$ and $B$, we \emph{have} to assume their independence. There is no information which makes $A$ more or less plausible given $B$, so we have to assign equal plausibilities to them: $P(A|B)=P(A)$.  But this is the condition of independence $P(AB)=P(A)P(B)$. 

The different status of plausibilities in comparison with physical hypotheses makes this obligatory character reasonable. Probabilities are not hypotheses about reality, but logical conclusions derived from the available information, derived using the logic of plausible reasoning.

Now, all this is much better explained in \cite{Jaynes}, so why I talk here about this?  The point is that I want to obtain here a formula which appropriately describes different independent subsystems. Subsystems are independent if we have no information suggesting their dependence.  A quite typical situation, so independence is quite typical too.  

So assume we have two subsystems, and we have no information suggesting any dependence between them. What do we have to assume based on the logic of plausible reasoning? 

For the probability distribution itself, the answer is trivial:  
\begin{equation}
\rho(q_1,q_2)=\rho_1(q_1)\rho_2(q_2).
\end{equation}
But what about the phase function $S(q)$?  We have to assume that the velocity of one system is independent of the state of the other one. So the potential of that velocity also should not depend on the state of the other system. So we obtain 
\begin{equation}
S(q_1,q_2) = S_1(q_1) + S_2(q_2).
\end{equation}
This gives, combined using the polar decomposition \eqref{polar}, the following rule for the wave function:
\begin{equation}
\psi(q_1,q_2) = \psi_1(q_1) \psi_2(q_2). 
\end{equation}

Of course, again, the rule is trivial. Nobody would have proposed any other rule for defining a wave function in case of independence.  Nonetheless, I consider it as remarkable that it has been derived from the very logic of plausible reasoning and the interpretation of the wave function. 

And, again, this property is shared by both variants, the quantum as well as the classical one. And if one thinks about the applicability of this rule to the classical variant of the wave function, one has to recognize that it is nontrivial: From a classical point of view, the rule of combining $\rho(q)$ and $S(q)$ into a wave function $\psi(q)$ is quite arbitrary, and there is no a priori reason to expect that such an arbitrary combination transforms the rule for defining independent classical states into a simple rule for independent wave functions, moreover, into the one used in quantum theory. 

\subsection{Appropriate reduction to a subsystem}

While the pre-\Sch equation is non-linear, it shares with the \Sch equation the weaker property of homogeneity of degree one: The operator condition 
\begin{equation}\label{homogenity}
\Omega(c\psi) = c \Omega(\psi)
\end{equation}
holds not only for the \Sch operator, but also for the non-linear pre-\Sch operator, and not only for $|c|=1$, as required by the $U(1)$ symmetry, which we have already justified, but for arbitrary $c$, so that the $U(1)$ symmetry is, in fact, a $GL(1)$ symmetry of the equations. 

So one ingredient of linearity is shared by the pre-\Sch equation. This suggests that for this property there should be a more fundamental explanation, one which is shared by both equations. And, indeed, such an explanation is possible. There should be a principle which allows an appropriate reduction of the equation to subsystems, something like the following

\begin{principle}[splitting principle]
There should a be simple ``no interaction'' conditions for the operators on a system consisting of two subsystems of type $\Omega=\Omega_1+\Omega_2$ such that the equation $\pd_t\psi=\Omega(\psi)$ splits for independent product states $\psi(q_1,q_2)=\psi_1(q_1)\psi_2(q_2)$ into independent equation's for the two subsystems. 
\end{principle}

Now, the time derivative splits nicely:
\begin{equation}
\pd_t \psi(q_1,q_2)= \pd_t\psi_1(q_1)\psi_2(q_2) + \psi_1(q_1)\pd_t\psi_2(q_2).
\end{equation}
It remains to insert the equations for the whole system $\pd_t \psi = \Omega(\psi)$ as well as for the two subsystems into this equation. This gives:
\begin{equation}
\Omega(\psi_1(q_1)\psi_2(q_2)) = \Omega_1(\psi_1(q_1))\psi_2(q_2) + \psi_1(q_1)\Omega_2(\psi_2(q_2)),
\end{equation}
where the $\Omega_i$ are subsystem operators acting only on the $q_i$, so that functions of the other variable are simply constants for them. On the other hand, in the splitting principle we have assumed $\Omega = \Omega_1+\Omega_2$. Comparison gives
\begin{equation}
\begin{split}
\Omega_1(\psi_1(q_1)\psi_2(q_2)) &= \Omega_1(\psi_1(q_1))\psi_2(q_2),\\
\Omega_2(\psi_1(q_1)\psi_2(q_2)) &= \psi_1(q_1)\Omega_2(\psi_2(q_2)).
\end{split}
\end{equation}
This follows from the homogeneity condition \eqref{homogenity}. The weaker $U(1)$ symmetry is not sufficient, because the values of $\psi_2(q_2)$ may be arbitrary complex numbers. 

Something similar to the splitting property is necessary for any equation relevant for us -- we have not enough information to consider the equation of the whole universe, thus, have to restrict ourself to small subsystems. And the equations of these subsystems should be at least approximately independent of the state of the environment. 

So the homogeneity of the equations -- in itself a quite nontrivial property of the equations, given the definition of the wave function by polar decomposition -- can be explained by this splitting property. 

\subsection{Summary}

So we have found three points, each in itself quite trivial, but each containing some nontrivial element of explanation based on the paleoclassical principles: The global $U(1)$ phase shift symmetry of the wave function, explained by the irrelevance of the phase as information about $q(t)$, the product rule of independence, explained by the logic of plausible reasoning, applied to the set of information described by the wave function, and the homogeneity of the equations, explained in terms of a splitting property for independent equations for subsystems. 

All three points follow the same scheme -- the interpretation in terms of incomplete information allows to derive a rule, and this rule is shared by both theories, quantum theory as well as the wave function variant of classical theory.  

So all three points give additional evidence that the simple, straightforward proposal to use the same interpretation of the wave function for both theories is the correct one.

\section{Incorporating Nelsonian stochastics}

The analogy between pre-\Sch and \Sch equation is a useful one, but it is that useful especially because the equations are nonetheless very different. And one should not ignore these differences. 

One should not, in particular, try to interpret the \Sch equation as \emph{the} linear approximation of the pre-\Sch equation: The pre-\Sch equation does not depend on $\hbar$. Real physical effects do depend. And so there should be also something in the fundamental theory, the theory in terms of the trajectory $q(t)$, which depends on $\hbar$. 

Here, Nelsonian stochastics comes into mind. In Nelsonian stochastics the development of the configuration $q$ in time is described by a deterministic drift term $b^i(q(t),t)dt$ and a stochastic diffusion term $dB^i_t$
\begin{equation}\label{process}
dq^i(t) = b^i(q(t),t) dt + dB^i_t,
\end{equation}
with  is a classical Wiener process $B^i_t$ with expectation $0$ and variance
\begin{equation}
\langle dB^i_t m_{ij} dB^j_t \rangle = \hbar dt,
\end{equation}
so that we have a $\hbar$-dependence on the fundamental level. The probability distribution $\rho(q(t),t)$ has, then, to fulfill the Fokker-Planck-equation:
\begin{equation}\label{FokkerPlanck}
\partial_t \rho + \pd_i (\rho b^i) - \frac{\hbar}{2}m^{ij}\pd_i\pd_j\rho = 0
\end{equation}
For the average velocity $v^i$ one obtains
\begin{equation}\label{vdef}
v^i(q(t),t) = b^i(q(t),t) - \frac{\hbar}{2}\frac{m^{ij}\pd_j \rho(q(t),t)}{\rho(q(t),t)},
\end{equation}
which fulfills the continuity equation. The difference between flow velocity $b^i$ and average velocity $v^i$, the osmotic velocity
\begin{equation}\label{osmotic}
u^i(q(t),t) = b^i(q(t),t)-v^i(q(t),t) = \frac{\hbar}{2}\frac{m^{ij}\pd_j\rho(q(t),t)}{\rho(q(t),t)} = \frac{\hbar}{2}m^{ij}\pd_j \ln\rho(q(t),t)
\end{equation}
has a potential $\ln\rho(q)$. The average acceleration is given by
\begin{equation}\label{acceleration}
a^i(q(t),t) = \partial_t v^i + v^j \pd_j v^i - \frac{\hbar^2}{2} m^{ij}\pd_j
        \left(\frac{m^{kl}\pd_k\pd_l \sqrt{\rho(q(t),t)}}{\sqrt{\rho(q(t),t)}}\right)
\end{equation}
For this average acceleration the classical Newtonian law
\begin{equation}\label{Newton}
a^i(q(t),t) = - m^{ij} \pd_j V(q(t),t)
\end{equation}
is postulated. Putting this into the equation \eqref{acceleration} gives
\begin{equation}\label{derivativeBohm}
\partial_t v^i + (v^j \pd_j)v^i =
 - m^{ij} \pd_j \left(V -\frac{\hbar^2}{2}
        \frac{\Delta \sqrt{\rho}}{\sqrt{\rho}}
\right) = - m^{ij}\pd_j(V + Q[\rho]).
\end{equation}

The next postulate is that the average velocity $v^i(q)$ has, at places where $\rho>0$, a potential, that means, a function $S(q)$, so that \eqref{guiding} holds. Then equation \eqref{derivativeBohm} can be integrated (putting the integration constant into $V(q)$), which gives \eqref{Bohm}. Finally, one combines the equations for $S(q)$ and $\rho(q)$ into the \Sch equation as in de Broglie-Bohm theory. 

So far the basic formulas of Nelsonian stochastics. Now, the beables of Nelsonian stochastics are quite different from those of the paleoclassical interpretation. The external flow $b^i(q,t)$ does not define some state of information, but some objective flow which takes away the configuration if it is at $q$. The configuration is guided in a way close to a swimmer in the ocean: He can randomly swim into one or another direction, but wherever he is, he is always driven by the flow of the ocean, a flow which exists independent of the swimmer. 

What would be the picture suggested for a stochastic process by the paleoclassical interpretation? It would be different, more close to a spaceship flying in the cosmos. If it changes its place because of some stochastic process, there will be no predefined, independent velocity $b^i(q)$ at the new place which can guide it. Instead, it would have to follow its previous velocity. Indeed, the velocity field $v^i(q)$, and, therefore, also $b^i(q)$, of the new place describes only information, not a physical field which could possibly influence the spaceship in its new position. 

But what follows for the information if there is some probability that the stochastic process causes the particle to change its location to the new one? It means that the old average velocity has to be recomputed. 

Of course, there is such a necessity to recompute only if the velocity at the new location is different from the old one.  So, if $\pd_i v^k=m^{jk}\pd_i p_j=0$, there would be no need for such a recomputation at all. This condition is clearly too strong, it would cause the whole velocity field to become constant. Now there is an interesting sub-condition, namely that $\pd_i p_j-\pd_j p_i=0$, the condition that the velocity field is a potential one. So the re-averaging of the average velocity $v^i(q)$ caused by the stochastic process will decrease the curl. 

This gives a first advantage in comparison with Nelsonian stochastics: The potentiality assumption does not have to be postulated, without any justification, for the external flow $b^i(q,t)$. (It is postulated for average velocity, but their difference -- the osmotic velocity -- has a potential anyway.) In the paleoclassical interpretation we have an independent motivation for postulating this. But, let's recognize, there is no fundamental reason to postulate potentiality: On the fundamental level, the curl may be nontrivial. All what is reached is that the assumption of potentiality is not completely unmotivated, but that it appears as a natural consequence of the necessary re-averaging.

There is another strangeness connected with the external flow picture of Nelsonian stochastics. The probability distribution $\rho(q)$ already characterizes the information about the configuration itself -- it is a probability for the swimmer in the ocean, not for the ocean. Once they depend on $\rho(q)$, as the average velocity $v^i(q)$, as the average acceleration $a^i(q)$ also describe information about the swimmer, not about the ocean. Then, it is postulated that the average acceleration of the swimmer has to be defined by the Newtonian law. But because this average velocity is, essentially, already defined by the very process -- the flow of the ocean -- together with the initial value for the swimmer, this condition for the swimmer becomes an equation for the ocean. This is conceptually unsound -- as if the ocean has to care that the swimmer is always correctly accelerated.  

But this conceptual inconsistency disappears in the paleoclassical interpretation. The drift field is now part of the incomplete information about the configuration itself, as defined by the average velocity and osmotic velocity. There is no longer any external drift field. And, so, it is quite natural that a condition for the average acceleration of the configuration gives an equation for the average velocity $v^i(q)$. So the paleoclassical picture is internally much more consistent.

\section{Enthropic dynamics}

The inconsistency of Nelsonian stochastics we have mentioned above has been also solved in another approach -- the ``entropic dynamics'' \cite{Caticha}. In this approach, the external flow $b^i(q,t)$ is a potential flow, with the entropy $S(q)$ as the potential of this flow. 

Conceptually, this approach is very close to this approach. We have a configuration $q$ (denoted there by $x$ to describe the position of one or more non-relativistic particles) together with an unspecified set of additional variables $y$. These other variables somehow interact with $q$, and their probability distribution $p(y|q)$ depends on $q$ and motivates the definition of a corresponding entropy $S(q) = \int p(y|q) \ln p(y|q) dy$ depending on $q$ too. 

How can this be made compatible with the paleoclassical approach, which does not have such external degrees of freedom?  This appears quite easy: We always consider only some small subsystem of the whole universe. So, we always have two parts of the whole universe: first, the system $q$ we study, and, then, the remaining part of the universe, the environment $q_{env}$, so that $q_{univ} = (q, q_{env})$.

Unfortunately, another point seems more problematic: The condition that the probability flow has a potential obtains a more fundamental role in this interpretation. Here, the potential of the flow is identified with the entropy. This makes the potential a well-defined global function, which is defined (as it would look like in our approach) by $S(q) = \int p(q_{env}|q) \ln p(q_{env}|q) dq_{env}$. 

This makes it more problematic to solve the Wallstrom objection. The point is that our proposal for the solution of the Wallstrom objection requires that the probability flow does not have a global potential: It can have a potential locally, but not near the zeros of the wave function, thus, the local potentials cannot be used to define a global potential.  

But the situation is not at all hopeless. The identification of $S(q)$ with entropy (as a well-defined global function) holds only for a part of the probability flow, the one which corresponds to the external flow $b^i(q,t)$. There is also the other part, the osmotic flow. Of course, the osmotic velocity has also a potential $\ln\rho(q)$. But this potential becomes anyway infinite near the zeros of the wave function, thus, needs to be regularized in the entropic approach too. This regularization can possibly solve the problem, following the lines proposed in sect. \ref{Wallstrom} and \cite{Wallstrom}.

\section{The character of the wave function}

Let's start with the consideration of the objections against the paleoclassical interpretation. Given that the basic formulas are not new at all, I do not have to wait for reactions to this paper -- some quite strong arguments are already well-known.

The first one is the evidence in favour of the thesis that the wave function describes the behaviour of real degrees of freedom, degrees of freedom which actually influence the things we can observe immediately. Here, at first Bell's argumentation comes into mind -- an argumentation for a double ontology which, I think, has impressed many of those who support today realistic interpretations: 
\begin{quote}
Is it not clear fro the smallness of the scintillation on the screen that we have to do with a particle? And is it not clear, from the diffraction and interference patterns, that the motion of the particle is directed by a wave? 
(\cite{Bell} p. 191).\end{quote}
But what are the points which make this argument that impressive? What is it what motivates us to accept some things as real? Here I see no way to express this better than Brown and Wallace: 
\begin{quote}
From the corpuscles' perspective, the wave-function is just a (time-dependent) function on their configuration space, telling them how to behave; it superficially appears similar to the Newtonian or Coulomb potential field, which is again a function on configuration space. No-one was tempted to reify the Newtonian potential; why, then, reify the wave-function?

Because the wave-function is a very different sort of entity. It is contingent (equivalently, it has dynamical degrees of freedom independent of the corpuscles); it evolves over time; it is structurally overwhelmingly more complex (the Newtonian potential can be written in closed form in a line; there is not the slightest possibility of writing a closed form for the wave-function of the Universe.) Historically, it was exactly when the gravitational and electric fields began to be attributed independent dynamics and degrees of freedom that they were reified: the Coulomb or Newtonian `fields' may be convenient mathematical fictions, but the Maxwell field and the dynamical spacetime metric are almost universally accepted as part of the ontology of modern physics.

We don't pretend to offer a systematic theory of which mathematical entities in physical theories should be reified. But we do claim that the decision is not to be made by fiat, and that some combination of contingency, complexity and time evolution seems to be a requirement.
(\cite{BrownWallace} p. 12-13)\end{quote}

So, let's consider the various points in favour of the reality of the wave function:

\subsection{Complexity of the wave function} 

The argument of complexity seems powerful. But, in fact, for an interpretation in terms of incomplete information \emph{this} is not a problem at all. 

Complexity is, in fact, a natural consequence of incompleteness of information.  The complete information about the truth of a statement is a single bit:  true or false.  The incomplete information is much more complex: it is a real number, the probability.  

As well, the complete information about reality in this interpretation is simple: a single trajectory $q(t)$.  But incomplete information requires much more information: Essentially, we need probabilities for all possible trajectories.  

\subsection{Time dependence of the wave function} 

Time dependence is, as well, a natural property of information -- complete or incomplete.  The information about where the particle has been yesterday transforms into some other information about where the particle is now. 

This transformation is, moreover, quite nontrivial and complex. 

It is also worth to note here that the law of transformation of information is a derivative of the real, physical law of the behaviour of the real beables.  

So it necessarily has all the properties of a physical law. 

We can, in particular, use the standard Popperian scientific method (making hypotheses, deriving predictions from them, testing and falsifying them, and inventing better hypotheses) to find these laws. 

This is, conceptually, a quite interesting point: The laws of probability themself are best understood, following Jaynes \cite{Jaynes}, as laws of extended logic, of the logic of plausible reasoning. 

But, instead, the laws of \emph{transformation} of probabilities in time follow from the laws of the original beables in time, and, therefore, have the character of physical laws.

Or, in other words, incomplete information develops in time in a way indistinguishable from the development in time of real beables. In particular, we use the same objective scientific methods to find and test them. 

So, nor the simple fact that there is a nontrivial time evolution, nor the very physical character of the dynamical laws, in all details, up to the scientific method we use to find them, give any argument against an interpretation in terms of incomplete information.  All this is quite natural for the evolution of incomplete information too. 

\subsection{The contingent character of the wave function}

There remains the most powerful argument in favour of the reality of the wave function: Its contingent character.  

There are different wave functions, and these different wave functions lead to objectively different probability distributions of the observable results.  

If we have, in particular, different preparation procedures, leading to different interference pictures, we really observe different interference pictures. It is completely implausible that these different interference pictures -- quite objective pictures -- could be the result of different sets of incomplete information about the same reality. The different interference pictures are, clearly, the result of different things happening in reality.  

But, fortunately, there is not even a reason to disagree with this. The very point is that one has to distinguish the wave function of a small subsystem -- we have no access to any other wave functions -- from the wave function of a closed system. The last is, in fact, only an object of purely theoretical speculation, because there is no closed system in nature except the whole universe, but we have no idea about the wave function of the whole universe. 

For the wave function of a small subsystem, the situation is quite different. It does not contain only incomplete information about the subsystem. In fact, it is only an effective wave function, and there is a nice formula of dBB theory, which can be used in our paleoclassical interpretation too: The formula which defines the conditional wave function of a subsystem $\psi^S(q_S,t)$ in terms of the wave function of the whole system (say the whole universe) $\psi(q_S,q_E,t)$ and the configuration of the environment $q_E(t)$:
\begin{equation}\label{conditional}
\psi^S(q_S,t) = \psi(q_S,q_E(t),t)
\end{equation}
This is a remarkable formula of dBB theory which contains, in particular, the solution of the measurement problem: The evolution of $\psi^S(q_S,t)$ is, in general, not described by a \Sch equation -- if there is interaction with the environment, the evolution of $\psi^S(q_S,t)$ is different, but, nonetheless, completely well-defined.  And this different evolution is the collapse of the wave function caused by measurement. 

Let's note that the paleoclassical interpretation requires to justify this formula in terms of the information about the subsystem. But this is not a problem. Indeed, assume the trajectory of the environment $q_E(t)$ is known -- say by observation of a classical, macroscopic measurement device. Then the combination of the knowledge described by the wave function of the whole system with the knowledge of $q_E(t)$ gives exactly the same knowledge as that described by $\psi^S(q_S,t)$. Indeed, the probability distribution gives
\begin{equation}
\rho^S(q_S,t) = \rho(q_S,q_E(t),t),
\end{equation}
and, similarly, the velocity field defined by $S(q)$ follows the same reduction principle:
\begin{equation}
\nabla S^S(q_S,t) = \nabla S(q_S, q_E(t),t).
\end{equation}
So in the paleoclassical interpretation the dBB formula the conditional wave function of a subsystem is a logical necessity. This provides yet another consistency check for the interpretation. 

But the reason for considering this formula here was a different one: The point is that the wave function of the subsystem in fact contains important information about other real beables -- the actual configuration of the whole environment $q_E(t)$.  So there are real degrees of freedom, different from the configuration of the system $q_S(t)$ itself, which are distinguished by different wave functions $\psi^S(q_S,t)$. 

And we do not have to object at all if one argues that the wave function contains such additional degrees of freedom. That's fine, it really contains them. These degrees of freedom are those of the configuration of the environment $q_E(t)$.  And this is not an excuse, but a logical consequence of the interpretation itself, a consequence of the definition of the conditional wave function of the subsystem \eqref{conditional}.

\subsection{The PBR theorem}

The formula \eqref{conditional} is also the key why we do not have to bother about impossibility theorems like the PBR theorem \cite{PBR}. In this theorem, one considers two non-orthogonal states, in the simplest case $|\psi_0\rangle = |0\rangle$ and $|\psi_1\rangle=|+\rangle = (|0\rangle+|1\rangle)/\sqrt{2}$. These states represent some probability distributions $\mu_0(\lambda)$ resp.  $\mu_1(\lambda)$ over some space of possible states of reality $\Lambda$. If the wave function is not completely ontological, but at least in part epistemological, then it will happen that the corresponding probability distributions $\mu_0(\lambda), \mu_1(\lambda)$ have a nontrivial overlap -- this would be the case when the difference between $|\psi_0\rangle$ and $|\psi_1\rangle$ is only a difference in our knowledge, not in the real state, which could be the $\lambda$ in the overlap. 

Then PBR prepares two states (or $n$ in a more general situation) using different preparation devices for this. Each device creates, by the choice of the experimenter, $|\psi_0\rangle$ or $|\psi_1\rangle$. Then these states are combined, and such a measurement is used that for each of the possible outcomes of the measurement the outcome would be 0 for some particular combination of $|\psi_0\rangle$ resp. $|\psi_1\rangle$ of the different prepared states. Now, if the outcome would depend only on the state $\lambda$, and this $\lambda$ happens to be with some probability $q$ in the overlap, then with probability $q^2$ it would be completely unclear which of the four combinations $|00\rangle, |0+\rangle, |+0\rangle, |++\rangle$ would be the origin of this particular pair. Now, by construction for each of these four possibilities one of the four possible results would appear only with probability zero. But what happens depends only on the two $\lambda_1, \lambda_2$ which both do not have the information if their origin was $|0\rangle$ or $|+\rangle$. Whatever the result, we would obtain a non-zero probability for this particular result even if quantum theory predicts for the corresponding origins probability zero. 

But in the paleoclassical intepretation this does not work. The point is that it is not questioned at all that the states $|\psi_0\rangle$ resp. $|\psi_1\rangle$ have no overlap. Indeed, these states have been prepared by different preparation procedures, and the configurations of the preparation devices are part of the complete configuration, thus, part of the ontology of the paleoclassical interpretation.  

And this counterargument extends to quite arbitrary constructions of this type: Whenever we consider different wave functions, these wave functions have to be prepared somehow. The minimal interpretation defines only one way to prepare such a wave function: A corresponding measurement so that the wave function which is prepared is the eigenstate which corresponds to some eigenvalue of the measurement. The eigenvalue is, then, after the preparation part of the configuration of the universe, namely part of the configuration of the measurement device (the position of its pointer). 

The wave function which does, according to the paleoclassical interpretation, not contain any information about the configuration of the environment at all, would be the wave function of the whole universe. 

\subsection{Conclusion}

So the consideration of the arguments in favour of a beable status for the wave function, even if they seem to be strong and decisive at a first look, appear to be in no conflict at all with the interpretation of the wave function of a closed system in terms of incomplete information about this system.

The most important point to understand this is, of course, the very fact that the conditional wave function of the small subsystems of the universe we can consider really have a different character -- they depend on the configuration of the environment. And, so, the argumentation that these conditional wave functions describe real degrees of freedom, external to the system itself, is accepted and even derived from the interpretation.  

Nonetheless, the point that nor the complexity of the wave function of the whole universe, nor its evolution in time, nor the physical character of the laws of this evolution are in any conflict with an interpretation in terms of incomplete information is an important insight too. 

\section{The Wallstrom objection}  \label{Wallstrom}

Wallstrom \cite{Wallstrom} has made an objection against giving the fields of the polar decomposition $\rho(q)$, $S(q)$ (instead of the wave function $\psi(q)$ itself) a fundamental role.  

The first point is that around the zeros of the quantum mechanical wave function, the flow has no longer a potential. The quantum flow is a potential one only where $\rho(q)>0$. But in general there will be submanifolds of dimension $n-2$ where the wave function is zero. And for a closed path $q(s)$ around such a zero submanifold one finds that
\begin{equation}\label{curl}
\oint m_{ij} v^i(q) dq^j \neq 0.
\end{equation}

This, in itself, is unproblematic for the interpretation: The condition of potentiality is not assumed to be a fundamental one -- the fundamental object is not $S(q)$ but the $v^i(q)$. There will be some mechanism in subquantum theory which locally reduces violations of potentiality, so we can assume that the flow is a potential one only as an approximation. 

It is also quite natural to assume that such a mechanism works more efficient for higher densities and fails near the zeros of the density. 

So having them localized at the zeros of the density is quite nice -- not for a really fundamental equation, which is not supposed to have any infinities, but if we consider quantum theory as being only an approximation.  

The really problematic part of the Wallstom objection is a different one: It is that the quantum flow has to fulfill a  nontrivial \emph{quantization condition}, namely 
\begin{equation}\label{curlquantization}
\oint m_{ij} v^i(q) dq^j = \oint \pd_j S(q) dq^j = 2\pi m\hbar, \qquad m\in\Z.
\end{equation}
The point is, in particular, that the equations \eqref{Bohm}, \eqref{continuity} in flow variables are not sufficient to derive this quantization condition. So, in fact, this set of equations is \emph{not} empirically equivalent to the \Sch equation. 

This is, of course, no wonder, given the fact that the equivalence holds only for $\rho(q)=|\psi(q)|^2>0$. But, however natural, empirical inequivalence is empirical inequivalence. 

Then, this condition looks quite artificial in terms of the $v^i$. What is a triviality in terms of the wave function -- that it has to be uniquely defined globally -- looks extremely artificial and strange if formulated in terms of the $v^i(q)$. As Wallstrom \cite{Wallstrom} writes, to ``the best of my knowledge, this condition [\eqref{curlquantization}] has not yet found any convincing explanation outside the context of the \Sch equation''.  

\subsection{A solution for this problem} 

Fortunately I have found a solution for this problem in \cite{againstWallstrom}. I do not claim that it is a complete one -- there is a part which is beyond the scope of an interpretation, which has to be left to particular proposals for a subquantum theory. One has to check if the assumptions I have made about such a subquantum theory are really fulfilled in that particular theory.  

The first step of the solution is to recognize that, for empirical equivalence with quantum theory, it is sufficient to recover only solutions with simple zeros. Such simple zeros give $m=\pm 1$ in the quantization condition \eqref{curlquantization}. This is a consequence of the principles of general position: A small enough modification of the wave function cannot be excluded by observation, but leads to a wave function in general position, and in general position the zeros of the wave function are non-degenerated. 

The next step is a look at the actual solutions. For the simple, two dimensional, rotational invariant, zero potential case these solutions are defined by $S(q)=m\f$, $\rho(q)=r^{2|m|}$. And this extends to the general situation, where $S(q)=m\f+\tilde{S}(q)$,  $\rho(q)=r^{2|m|}\tilde{\rho}(q)$, such that $\tilde{S}(q)$ is well-defined in a whole environment of the zero, and $\tilde{\rho}(0)>0$.

But that means we can replace the problem of justifying an integer $m$ in $S(q)=m\f$, where all values of $m$ seem equally plausible, by the quite different problem of justifying $\rho(q)=r^2$ (once we need only $m=\pm 1$) in comparison with other $\rho(q)=r^\a$. This is already a quite different perspective. 

We make the natural conclusion and invent a criterion which prefers $\rho(q)=r^2$ in comparison with other $r^\a$.  This is quite easy:  

\begin{postulate}[regularity of $\Delta\rho$]\label{preg}
If $\rho(q)=0$, then $0< \Delta \rho(q) < \infty$ almost everywhere. 
\end{postulate}

This postulate already solves the inequivalence argument. The equations \eqref{Bohm}, \eqref{continuity} for $\rho(q)>0$, together with postulate \ref{preg}, already defines a theory empirically equivalent to quantum theory (even if the equivalence is not exact, because only solutions in general position are recovered). 

It remains to invent a justification for this postulate. 

The next step is to rewrite equation \eqref{Bohm} for stable states in form of a balance of energy densities. In particular, we can rewrite the densitized quantum potential as
\begin{equation}\label{Q}
Q[\rho]\rho = \frac12\rho u^2 - \frac14 \Delta \rho,
\end{equation}
with the ``osmotic velocity'' $u(q) = \frac12 \nabla \ln \rho(q)$. Then the energy balance looks like
\begin{equation}
\frac12\rho v^2 + \frac12 \rho u^2 + V(q) = \frac14\Delta\rho.
\end{equation}

So, the operator we have used in the postulate is not an arbitrary expression, but a meaningful term, which appears in an important equations -- an energy balance.  This observation is already sufficient to justify the $\Delta \rho(q) < \infty$ part of the condition. There may be, of course, subquantum theories which allow for infinities in energy densities, but it is hardly a problem for a subquantum theory to justify that expressions which appear like energy densities in energy balances have to be finite.  

Last but not least, subquantum theory has to allow for nonzero curl $\w$, but has to suppress it to obtain a quantum limit. One way to suppress it is to add a penalty term $U(\rho,\w)$ which increases with $|\w|$. This would give
\begin{equation}
\frac12\rho v^2 + \frac12 \rho u^2 + V(q) + U(\rho,\w) = \frac14\Delta\rho.
\end{equation}
Moreover subquantum theory has to regularize the infinities of $v$ and $u$ at the zeros of the density. One can plausibly expect that this gives finite but large values of $|\w|$ at zero which decrease sufficiently fast with $r$. Now, a look at the energy balance shows that, if the classical potential term $V(q)$ is neglected (for example assuming that it changes only smoothly) the only term which can balance $U(\rho,\w)$ at zero is the $\Delta \rho$ terms, which, therefore, has to be finite but nonzero. Or, at least, it would not be difficult to modify the definition of $U(\rho,\w)$ in such a way that the extremal value $\Delta\rho=0$ (we have necessarily $\Delta\rho\ge 0$ at the minima) has to be excluded. 

So the postulate seems nicely justifiable. For some more details I refer to \cite{againstWallstrom}. What remains is the particular job of particular proposals for subquantum theories -- they have to check if the way to justify the postulate really works in this theory, or if it may be justified in some other way. But this is beyond the scope of the interpretation. What has to be done by the interpretation -- in particular, to obtain empirical equivalence with quantum theory -- has been done.
 
\section{A Popperian argument for preference of an information-based interpretation}

One of Popper's basic ideas was that we should prefer -- as long as possible without conflict with experience -- theories which are more restrictive, make more certain predictions, depend on less parameters.  And, while this criterion has been formulated for theories, it should be applied, for the same reasons, to more general principles of constructing theories too. 

This gives an argument in preference for an interpretation in terms of incomplete information. 

Indeed, let's consider, from this point of view, the difference between interpretations of fields $\rho(q)$, $v^i(q)$ in terms of a probability for some real trajectories $q(t)$, and interpretations which reify them as describing some external reality, different from $q(t)$, which influences the trajectory $q(t)$.  

It is quite clear and obvious which of the two approaches is more restrictive. Designing theories of the first type, we are restricted, for the real physics, to theories for single trajectories $q(t)$. Then, given that we have identified the connection between the fields $\rho(q)$, $v^i(q)$ and $q(t)$ as those of a probability flow, everything else follows. There is no longer any freedom of choice for the field equations. If we have fixed the Hamiltonian evolution for $p(t), q(t)$, the Liouville field equation for $\rho(p,q)$ is simply a logical consequence. Similarly, the continuity equation \eqref{continuity} is a law of logic, it cannot be modified, is no longer a subject of theoretical speculation. It is fixed by the interpretation of $\rho(q)$, $v^i(q)$ as a probability flow, in the same way as $\rho(q)\ge 0$ is fixed. 

In the second case, we have much more freedom -- the full freedom of speculation about field theories in general. In particular, the continuity equation can be modified, introducing, say, some creation and destruction processes, which are quite natural if $\rho(q)$ describes a density of some external objects. 

The derivation of the $U(1)$ global phase shift symmetry is another particular example of such an additional logical law following from the interpretation.  

So there are some consequences of the interpretation which have purely logical character, including the continuity equation, $\rho(q)\ge 0$, and the $U(1)$ global phase shift symmetry.  But these will not be the only consequences. The other equations will be restricted, in comparison with field theories, too, but in a less clear and obvious way. There is, last but not least, a large freedom of choice for the equations of the real beables $q(t)$, which corresponds to a similarly large freedom of choice of the resulting equations for $\rho(q)$, $v^i(q)$. But this freedom of choice will be, nonetheless, much smaller than the completely arbitrariness of a general field theory.  

This consideration strongly indicates that we have to prefer the interpretation in terms of incomplete information until it has been falsified, until it appears incompatible with observation. 

The immediate, sufficiently trivial logical consequences we have found yet are compatible with the \Sch equation and therefore with observation. So we should prefer this interpretation. 

\section{Open problems}

Instead of using such a Popperian argumentation, I could have, as well, used simply Ockham's razor: Don't multiply entities without necessity. Once, given this interpretation, there is no necessity for more than a single classical trajectory $q(t)$, one should not introduce other real entities like really existing wave functions. 

But the Popperian consideration has the advantage that it implicitly defines an interesting research program.

\subsection{Other restrictions following from the interpretation}\label{viability}

In fact, given the restrictive character of the interpretation, there may be other, additional, more subtle restrictions of the equations for probability flows $\rho(q)$, $v^i(q)$, restrictions which we have not yet identified, but which plausibly exist. 

So what are these additional restrictions for equations for probability flows $\rho(q)$, $v^i(q)$ in comparison with four general, unspecific fields fulfilling a continuity equation? I have no answer. 

This is clearly an interesting question for future research. It is certainly also a question interesting in itself, interesting from point of view of pure mathematics, for a better understanding of probability theory.  

The consequences may be fatal for this approach -- it may be that we find that the \Sch equation does not fit into this set of restrictions even approximately\footnote{That it does not fit exactly will be shown below, in the consideration about quantum computers. But for the interpretation to be viable, exactness is not obligatory -- it is sufficient that the \Sch equation can be obtained as an approximation.}. This possibility of falsification is, of course, the very point of the Popperian consideration. I'm nonetheless not afraid that this happens, but this is only a personal opinion. 

The situation may be, indeed, much better: That this subclass of theories contains the \Sch equation, but appears to be heavily restricted by some additional, yet unknown, conditions. Then, all of these additional restrictions would give us additional partial answers to the ``why the quantum'' question. 

\subsection{Why is the \Sch equation linear?}

The most interesting question which remains open is why the \Sch equation is linear. We have found only some part of the answer -- an explanation for the global $U(1)$ phase symmetry based on the informational content and of the homogeneity based on the reduction of the equation to subsystems. 

But, given that the pre-\Sch equation is non-linear, but interpreted in the same way, the linearity of the \Sch equation cannot follow from the interpretation taken alone. Some other considerations are necessary to obtain linearity.

One idea is to justify linearity as an approximation. Last but not least, linearization is a standard way to obtain approximations. 

The problem of stability in time may be also relevant here. The pre-\Sch equation becomes invalid after a short period of time, when the first caustic appears. There is no such problem in quantum theory, which has a lot of stable solutions. Then, there should be not only stable solutions, but also slowly changing solutions: It doesn't even matter if the fundamental time scale is Planck time or something much larger -- even if it is only the time scale of strong interactions, all the things changing around us are changing in an extremely slow way in comparison with this fundamental time scale. But the linear character of the \Sch equation gives us a way to obtain solutions slowly changing in time by combining different stable solutions with close energies. 

\section{A theoretical possibility to test: The speedup of quantum computers}\label{computer}

There is an idea which suggests, at least in principle, a way to distinguish observationally the paleoclassical interpretation (or, more accurate, the class of all more fundamental theories compatible with the paleoclassical interpretation) from the minimal interpretation. 

The idea is connected with the theory of quantum computers. If quantum computers really work as predicted by quantum theory, these abilities will provide fascinating tests of the accuracy of quantum theory. In the case of Simon's algorithm, the speed-up is exponential over any classical algorithm. It may be a key for the explanation of this speed-up that the state space (phase space) of a composite classical system is the Cartesian product of the state spaces of its subsystems, while the state space of a composite quantum system is the tensor product of the state spaces of its subsystems. For $n$ qubits, the quantum state space has $2^n$ instead of $n$ dimensions. So the information required to represent a general state increases exponentially with $n$ (see, for example, \cite{Bub}). There is also the idea ``that a quantum computation is something like a massively parallel classical computation, for all possible values of a function. This appears to be Deutsch's view, with the parallel computations taking place in parallel universes.'' \cite{Bub}. 

It is this \emph{exponential} speedup which suggests that the predictions of standard QM may differ from those of the paleoclassical interpretation. An exact quantum computer would have all beables contained in the wave function as degrees of freedom.  A quantum computer in the paleoclassical interpretation has only the resources provided by its beables.  But these beables are, essentially, only the classical states of the universe. Given the exponential difference between them, $n$ vs. $2^n$ dimensions for qubits instead of classical bits, an exact quantum computer realizable at least in principle in a laboratory on Earth can have more computational resources than a the corresponding computer of the paleoclassical interpretation, which can use only the classical degrees of freedom, even if these are the classical degrees of freedom of the whole observable universe. 

But if we distort a quantum computer, even slightly, the result will be fatal for the computation. In particular, if this distortion is of the type of the paleoclassical interpretation, which replaces an exact computer with a $2^n$-dimensional state space by an approximate one with only $N$ dimensions, then even for quite large $N\gg n$ the approximate computer will be simply unable to do the exact computations, even in principle. There simply are no parallel universes in the paleoclassical interpretation to make the necessary parallel computations. 

So, roughly speaking, the prediction of the paleoclassical interpretation is that a sufficiently large quantum computer will fail to give the promised exponential speedup. The exponential speedup will work only up to a certain limit, defined by the logarithm of the relation between the size of the whole universe and the size of the quantum computer. 

Of course, we do not know the size of the universe. It may be much larger than the size of the observable universe, or even infinite. Nonetheless, this argument, applied to any finite model of the universe, shows that the true theory, the theory in the configurational beables alone, cannot be exactly quantum theory. This is in my opinion the most interesting conclusion. 

But let's see if we can, nonetheless, make even testable (at least in principle) predictions. So let's presuppose that the universe is finite, and, moreover, let's assume that this size is not too many orders larger than the its observable part. This would be already sufficient to obtain some numbers about the number of qubits such that the $2^n$ exponential speedup is no longer possible. This number will be sufficiently small, small enough that a quantum computer in a laboratory on Earth will be sufficient to reach this limit. 

And, given the logarithmic dependence on $N$, increasing $N$ does not help much. If it is possible to build a quantum computer with $n$ qubits, why not with $2n$? This would already move the size of the universe into completely implausible regions. 

\subsection{The speed of quantum information as another boundary}

Instead of caring about the size of the universe, it may be more reasonable to care about the size of the region which can causally influence us. Here I do not have in mind the limits given by of relativity, by the speed of light. Given the violation of Bell's inequality, there has to be (from a realist's point of view) a hidden preferred frame where some other sort of information -- quantum information -- is transferred with a speed much larger than the speed of light.  But if we assume that the true theory has some locality properties, even if only in terms of a much larger maximal speed, the region which may be used by a quantum computer for its computational speedup decreases in comparison with the size of the universe. 

So if we assume that there is such a speed limit for quantum information, then we obtain in the paleoclassical interpretation even more restrictive limits for the speedup reachable by quantum computers, limits which depend logarithmically on the speed limit for quantum information. 

Nonetheless, I personally don't believe that quantum computers will really reach large speedups. I think the general inaccuracy of human devices will prevent us from constructing quantum computers which can really use the full $2^n$ power for large enough $n$. I would guess that the accuracy requirements necessary to obtain a full $2^n$ speedup will also grow exponentially. So I guess that quantum computers will fail already on a much smaller scale. 

\section{Conclusions}

So it's time to summarize:

\begin{itemize}

\item The unknown, true theory of the whole universe is a theory defined on the classical configuration space $Q$, with the configuration $q(t)$ evolving in absolute time $t$ as a complete description of all beables. 

\item The wave function of the whole universe is interpreted as a consistent set of incomplete information about these fundamental beables. 

\item In particular, $\rho(q)$ defines not some ``objective'' probability, but an incomplete set of information about the real position $q$, described, as required by the logic of plausible reasoning, by a probability distribution $\rho(q)dq$.  

\item The phase $S(q)$ describes, via the ``guiding equation'', the expectation value $\langle\dot{q}\rangle$ of the velocity given the actual configuration $q$ itself. So the ``guiding equation'' is not a physical equation, but has to be interpreted as part of the definition of $S(q)$, which describes which information about $q$ is contained in $S(q)$.    

\item Only a constant phase factor of $\psi(q)$ does not contain any relevant information about the trajectory $q(t)$. Therefore, the equations for $\psi(q)$ should not depend on such a factor. 

\item The \Sch equation is interpreted as an approximate equation. More is not to be expected, given that it describes the evolution of an incomplete set of information. 

\item The linear character of the \Sch equation is interpreted as an additional hint that it is only an approximate equation. 

\item The interpretation can be used to reinterpret Nelsonian stochastics. The resulting picture is conceptually more consistent than the original proposal. 

\item The Wallstrom objection appears much less serious than expected. The quantization condition for simple zeros (which is sufficient because it is the general position) can be derived from the much simpler regularity postulate that $0<\Delta \rho(q)<\infty$ if $\rho(q)=0$. While a final justification of this condition has to be left to a more fundamental theory, it is, as shown in \cite{againstWallstrom}, plausible that this is not a problem for such theories. 

\item If the true theory of the universe is defined on classical configurations, and the whole universe is finite, quantum computers can give their promised exponential speedup only up to an upper bound for the number of qubits, which is much smaller than the available qubits of the universe. This argument shows that the \Sch equation has to be approximate. 

\item The dBB problem with of the ``action without reaction'' asymmetry is solved: For effective wave function, the collapse defines the back-reaction, for the wave function of the whole universe there should be no such back-reaction -- it is only an equation about incomplete information about reality, not about reality itself. 

\item The wave functions of small subsystems obtain a seemingly objective, physical character only because they, as conditional wave functions, depend on the physical beables of the environment. 

\end{itemize}

From point of view of simplicity, the paleoclassical interpretation is superior to all alternatives. The identification of the fundamental beables with the classical configuration space trajectory $q(t)$ is sufficient for this point. 

It has also the additional advantage that it leads to strong restrictions of the properties of a more fundamental, sub-quantum theory:  It has to be a theory completely defined on the classical configuration space. Moreover, it has to be a theory which, in its statistical variant, leads to a Fokker-Planck-like equations for the probability flow defined by the classical flow variables $\rho(q)$ and $v^i(q)$. 

\begin{appendix}
 
\section{Compatibility with relativity}\label{relativity}

Most physicists consider the problem of compatibility with relativity as the major problem of dBB-like interpretations -- sufficient to reject them completely. But I have different, completely independent reasons for accepting a preferred frame, so that I don't worry about this. 

There are two parts of this compatibility problem, a physical and a philosophical one, which should not be mingled: 

The physical part is that we need a dBB version of relativistic quantum field theories, in particular of the standard model of particle physics -- versions which do not have to change the fundamental scheme of dBB, and, therefore, may have a hidden preferred frame. 

The philosophical part is the incompatibility of a hidden preferred frame with relativistic metaphysics.  

The physical problem is heavily overestimated, in part because of the way dBB theory is often presented: As a theory of many particles. The appropriate way is to present it as a general theory in terms of an abstract configuration space $Q$, and to recognize that field theories as well as their lattice regularizations fit into this scheme.  The fields are, of course, fields on three-dimensional space $\R^3$ changing in time,  and their lattice regularizations live on three-dimensional spatial lattices $\Z^3$, not four-dimensional space-time lattices. But this violation of manifest relativistic symmetry is already part of the second, philosophical problem. 

The simple, seemingly non-relativistic Hamiltonian \eqref{HamiltonFunction}, with $p^2$ instead of $\sqrt{p^2 + m^2}$, is also misleading: For relativistic field theories the quadratic Hamiltonian is completely sufficient. Indeed, a relativistic field Lagrangian is of type
\begin{equation}\label{field}
\mathscr{L} = \frac{1}{2}((\pd_t\varphi)^2 - (\pd_i\varphi)^2)-V(\varphi).
\end{equation}
This gives momentum fields $\pi = \pd_t\varphi$ and the Hamiltonian
\begin{equation}
\mathscr{H} = \frac{1}{2}(\pi^2 + (\pd_i\varphi)^2)+V(\varphi) = \frac{1}{2}\pi^2 + \tilde{V}(\varphi)
\end{equation}
quadratic in $\pi$, thus, the straightforward field generalization of the standard Hamiltonian \eqref{HamiltonFunction}. And for a lattice regularization, the Hamiltonian is already exactly of the form \eqref{HamiltonFunction}. So, whatever one thinks about the dBB problems with other relativistic fields, it is certainly not relativity itself which causes the problem. 

The problem with fermions and gauge fields is certainly more subtle. Here, my proposal is described in \cite{clm}. It heavily depends on a preferred frame, but for completely different reasons -- interpretations of quantum theory are not even mentioned. Nonetheless, fermion fields are obtained from field theories of type \eqref{field}, and gauge-equivalent states  are interpreted as fundamentally different beables, so that no BRST factorization procedure is necessary. 

Another part of the physical problem is compatibility with relativistic gravity. Here I argue that it is the general-relativistic concept of background-freedom which is incompatible with quantum theory and has to be given up. I use a quantum variant of the classical hole argument for this purpose \cite{hole}. As a replacement, I propose a theory of gravity with background and preferred frame \cite{glet}.  

So there remains only the philosophical part. But here the violation of Bell's inequality gives a strong argument in favour of a preferred frame: Every realistic interpretation needs it. Moreover, the notion of metaphysical realism presupposed by ``realistic interpretation'' is so weak that Norsen \cite{Norsen} has titled a paper ``against realism'', arguing that one should not mention realism at all in this context. The metaphysical notion of realism used there is so weak that to give it up does not save Einstein locality at all -- it is presupposed in this notion too. 

\section{Pauli's symmetry argument}\label{Pauli}

There is also another symmetry argument against dBB theory, which goes back to Pauli \cite{Pauli}, which deserves to be mentioned: 

\begin{quote}
\ldots the artificial asymmetry introduced in the treatment of the two variables of a canonically conjugated pair
characterizes this form of theory as artificial metaphysics. (\cite{Pauli}, as quoted by \cite{Freire}),

``\ldots the Bohmian corpuscle picks out by fiat a preferred basis (position) \ldots'' \cite{BrownWallace}
\end{quote}

Here my counterargument is presented in \cite{kdv}. I construct there an explicit counterexample, based on the KdV equation,  that the Hamilton operator alone, without a specification which operator measures position, is not sufficient to fix the physics. It follows that the canonical operators have to be part of the complete definition of a quantum theory and so have to be distinguished by the interpretation as something special, different from the other similar pairs of operators. 

The Copenhagen interpretation makes such a difference -- this is one of the roles played by the classical part. But attempts to get rid of the classical part of the Copenhagen interpretation, without adding something else as a replacement, are not viable \cite{pure}. One has to introduce a replacement. 

Recognizing that the configuration space has to be part of the definition of the physics gives more power to an old argument in favour of the pilot wave approach, made already by de Broglie at the Solvay conference 1927:
\begin{quote}
``It seems a little paradoxical to construct a configuration space with the coordinates of points which do not exist.'' \cite{deBroglie}.
\end{quote}

\section{Problems with field theories}\label{fields}

It has been argued that fields are problematic as beables in general for dBB theory, a point which could be problematic for the paleoclassical interpretation too. 

In particular, the equivalence proof between quantum theory and dBB theory depends on the fact that the overlap of the wave function for different macroscopic states is irrelevant. But it appeared in field theory that for one-particle states there is always a non-trivial overlap, even if these field states are localized far away from each other.  

But, as I have shown in \cite{overlap}, the overlap decreases sufficiently fast (approximately exponentially) with greater particle numbers. 

\section{Why we observe configurations, not wave packets}\label{wavepackets}

In the many worlds community there is a quite popular argument against dBB theory -- that it is many worlds in denial (for example, see \cite{BrownWallace}). But this argument depends on the property of dBB theory that the wave function is a beable, a really existing object. So it cannot be applied against the paleoclassical interpretation, where the wave function is no longer a beable. 

But in fact it is invalid also as an argument against dBB theory. In fact, already in dBB theory it is the configuration $q(t)$ which is observable and not the wave function.  

This fact is sometimes not presented in a clear enough way, so that misrepresentations become possible. For example Brown and Wallace \cite{BrownWallace} find support for another interpretation even in Bohm's original paper \cite{Bohm}:
\begin{quote}
\ldots even in his hidden variables paper II of 1952, Bohm seems to associate the wavepacket chosen by the corpuscles as the representing outcome of the measurement -- the role of the corpuscles merely being to point to it.
(\cite{BrownWallace} p. 15)\end{quote}
and support their claim with the following quote from Bohm
\begin{quote}
Now, the packet entered by the apparatus variable $y$ determines the actual result of the measurement, which the observer will obtain when he looks at the apparatus.
(\cite{Bohm} p. 118)\end{quote}
This quote may, indeed, lead to misunderstandings about this issue. So, maybe we observe only the wave packet containing the configuration, instead of configuration itself? 

My answer is a clear no. I don't believe into the existence of sufficiently localized wave packets to construct some effective reality out of them, as assumed by many worlders. 

Today they use decoherence to justify their belief that wave packets will be sufficiently localized. But decoherence presupposes another structure -- a decomposition of the world into systems. Only from point of view of such a subdivision of $q$ into, say, $(x,y)$, a non-localized wave function like $e^{-(x-y)^2/2}$ may look similar to a superposition, for different $a$, of product states localized in $x$ and $y$ like $e^{-(x-a)^2}\cdot e^{-(y-a)^2}$. 


But where does this subdivision into systems come from?  The systems around us -- observers, planets, measurement devices -- cannot be used for this purpose. They do not exist on the fundamental level, as a predefined structure on the configuration space. But the subdivision into systems has to, once we need it to construct localized objects. Else, the whole construction would be circular.

So one would have to postulate something else as a fundamental subdivision into systems. This something else is undefined in the interpretations considered here, so an interpretation based on it is simply another interpretation, with another, additional fundamental structure -- a fundamental subdivision into systems. 

\end{appendix}


\begin{thebibliography}{99}

\bibitem{Madelung} E. Madelung, Quantentheorie in hydrodynamischer Form, Z. Phys. 40, 322-326 (1926)

\bibitem{deBroglie} de Broglie, L., La nouvelle dynamique des quanta, in J. Bordet, Gauthier-Villars (eds.), Electrons et Photons: Rapports et Discussions du Cinquieme Conseil de Physique, Paris, 105-132 (1928), English translation in: Bacciagaluppi, G., Valentini, A.: ``Quantum Theory at the Crossroads: Reconsidering the 1927 Solvay Conference'', Cambridge University Press, and \href{http://arxiv.org/abs/arXiv:quant-ph/0609184}{arXiv:quant-ph/0609184} (2006)

\bibitem{Bohm}Bohm, D: A suggested interpretation of the quantum theory in terms of ``hidden'' variables, Phys. Rev. 85, 166-193 (1952)

\bibitem{Nelson} E. Nelson, 
Derivation of the {S}chr\"odinger Equation from Newtonian Mechanics, Phys.Rev. 150, 1079-1085 (1966)

\bibitem{Caticha} A. Caticha -- Entropic Dynamics, Time and Quantum Theory,  J. Phys. A44:225303, 2011, \href{http://arxiv.org/abs/arXiv:1005.2357}{arxiv:1005.2357}


\bibitem{Jaynes} E. T. Jaynes. Probability Theory: The Logic of Science, Cambridge University Press (2003), online at \href{http://bayes.wustl.edu}{bayes.wustl.edu}

\bibitem{Laplace} Laplace, Th\'eorie analytique des probabilit\'es (1812)

\bibitem{deFinetti} B. de Finetti, Theory of Probability, Wiley, New York (1990)

\bibitem{Bayesian} C. M. Caves, C. A. Fuchs, R. Schack, Quantum probabilities as Bayesian probabilities, Phys. Rev. A 65, 022305 (2002), \href{http://arxiv.org/abs/arXiv:arXiv:quant-ph/0106133v2}{arXiv:arXiv:quant-ph/0106133v2} 

\bibitem{Wallstrom} T. C. Wallstrom, Phys. Rev. A 49, 1613 (1994)

\bibitem{againstWallstrom} Schmelzer, I.: A solution for the Wallstrom problem of Nelsonian stochastics, \href{http://arxiv.org/abs/arXiv:1101.5774v2}{arXiv:1101.5774v2} 

\bibitem{Mises} R. von Mises, Probability, Statistics and Truth, 2nd edition. New York: The Macmillan Company, 1957

%
%

\bibitem{Pauli} Wolfgang Pauli, Remarques sur le problème des paramètres cachés dans la mécanique quantique et sur la théorie de l’onde pilote, in André George, ed., Louis de Broglie – physicien et penseur (Paris, 1953), 33-42

\bibitem{Freire} Freire Jr., O.: Science and exile: David Bohm, the hot times of the Cold War, and his struggle for a new interpretation of quantum mechanics, Historical Studies on the Physical and Biological Sciences 36(1), 1-34, 
\href{http://arxiv.org/abs/arXiv:quant-ph/0508184}{arXiv:quant-ph/0508184} (2005)

\bibitem{Bub} Jeffrey Bub, Quantum Information and Computation, \href{http://arxiv.org/abs/arXiv:quant-ph/0512125v2}{arXiv:quant-ph/0512125v2} (2005)

\bibitem{BrownWallace} Brown, H.R., Wallace, D.: Solving the measurement problem: de Broglie-Bohm loses out to Everett,  Foundations of Physics, Vol. 35, No. 4, 517 (2005) \href{http://arxiv.org/abs/arXiv:quant-ph/0403094}{arXiv:quant-ph/0403094}

\bibitem{Wallace} Wallace, D.: The quantum measurement problem: state of play,  \href{http://arxiv.org/abs/arXiv:0712.0149}{arXiv:0712.0149} (2007)

\bibitem{Goldstein} Goldstein, S. and S. Teufel (2000). Quantum spacetime without observer: ontological clarity and the conceptual foundations of quantum gravity. In C. Callender and N. Huggett (Eds.), Physics meets philosophy at the Planck scale. Cambridge: Cambridge University Press; pp. 275–289.  \href{http://arxiv.org/abs/arXiv:arXiv:quant-ph/9902018}{arXiv:quant-ph/9902018}

\bibitem{Duerr} D\"urr, D., S. Goldstein, and N. Zanghi (1996). Bohmian Mechanics and the Meaning of the Wave Function, in R. S. Cohen, M. Horne, and
J. Stachel (Eds.), Experimental Metaphysics: quantum mechanical studies in honour of Abner Shimony. Dordrecht: Kluwer; pp. 25–38, \href{http://arxiv.org/abs/arXiv:arXiv:quant-ph/0106133v2}{arXiv:quant-ph/9512031}

\bibitem{Butterfield} J. Butterfield, On Hamilton-Jacobi Theory as a Classical Root of Quantum Theory, in: Elitzur, A.C., Dolev, S., Kolenda, N. (eds.), Quo Vadis Quantum Mechanics? Possible Developments in Quantum Theory in the 21st Century, New York: Springer, 2004, \href{http://arxiv.org/abs/arXiv:quant-ph/0210140v2}{arXiv:quant-ph/0210140v2}

\bibitem{Bell} J. S. Bell, Speakable and unspeakable in quantum mechanics, Cambridge University Press 1987

\bibitem{Arnold} V. I. Arnold, Mathematical methods of classical mechanics, 2nd ed. Springer, New York, 1989

\bibitem{ValentiniH} A. Valentini, Signal-locality, uncertainty, and the subquantum H-theorem: I and II. Phys. Lett. A 156 5–11; 158 1–8, 1991

\bibitem{Valentini} A. Valentini, H. Westman, Dynamical Origin of Quantum Probabilities, Proc. R. Soc. A 461 253–272, 2005, \href{http://arxiv.org/abs/arXiv:quant-ph/0403034v2}{arXiv:quant-ph/0403034v2} 

\bibitem{Norsen} T. Norsen, Against `realism', \href{http://arxiv.org/abs/arXiv:quant-ph/0607057v2}{arXiv:quant-ph/0607057v2}

\bibitem{clm} Schmelzer, I.: A Condensed Matter Interpretation of SM Fermions and Gauge Fields, Found. Phys. vol. 39, 1, p. 73-107 (2009), \href{http://arxiv.org/abs/arXiv:0908.0591}{arXiv:0908.0591} 

\bibitem{PBR} M. F. Pusey, J. Barrett, and T. Rudolph. On the reality of the quantum state. Nature Physics,
8:475–478, 2012. \href{http://arxiv.org/abs/arXiv:1111.3328}{arXiv:1111.3328}, doi:10.1038/nphys2309

\bibitem{glet} Schmelzer, I.: A theory of gravity with preferred frame and condensed matter interpretation, \href{http://arxiv.org/abs/arXiv:1003.1446v2}{arXiv:1003.1446v2}

\bibitem{hole} Schmelzer, I.: The background as a quantum observable: Einstein's hole argument in a quasiclassical context
\href{http://arxiv.org/abs/arXiv:0909.1408v2}{arXiv:0909.1408v2} 

\bibitem{kdv} Schmelzer, I.: Why the Hamilton operator alone is not enough, Found. Phys. vol.39, 5, 486-498 (2009), \href{http://arxiv.org/abs/arXiv:0901.3262}{arXiv:0901.3262} 

\bibitem{pure} Schmelzer, I.: Pure quantum interpretations are not viable, Found Phys vol. 41, 2, 159-177 (2011), \href{http://arxiv.org/abs/arXiv:0903.4657v2}{arXiv:0903.4657v2} 

\bibitem{overlap} Schmelzer, I.: Overlaps in pilot wave field theories, Found. Phys. vol. 40, 3, 289–300 (2010), \href{http://arxiv.org/abs/arXiv:0904.0764}{arXiv:0904.0764}



\end{thebibliography}
\end{document}